\documentclass[10pt]{article}

\usepackage{amsmath,latexsym,graphicx,color,amsfonts,amsthm}
\usepackage{fullpage}

\renewcommand{\epsilon}{\varepsilon}
\newcommand{\remove}[1]{}
\newcommand{\ftwo}{{\cal F}^{(2)}}

\newtheorem{theorem}{Theorem}
\newtheorem{proposition}[theorem]{Proposition}
\newtheorem{lemma}[theorem]{Lemma}

\newtheorem{claim}[theorem]{Claim}

\newcommand{\opt}{\mbox{\sc opt}}

%\newcommand{\A}{\mbox{$A_{\bi}$}}

%MIN$_{\mu}$}}

\newcommand{\vett}{\mathbf }
\renewcommand{\b}{\vett b}
\newcommand{\bi}{\vett b_{-i}}

%{\ell_{\inf}}
%{\ell_{\sup}}
\newcommand{\prob}[1]{\mbox{\rm Prob}\left[\,#1\,\right]}
\newcommand{\expec}[1]{\mbox{\rm E}\left[\,#1\,\right]}

\begin{document}

\title{Using Lotteries to Approximate the Optimal Revenue
\thanks{Supported by EPSRC grants EP/G069239/1 and EP/G069034/1 ``Efficient Decentralised Approaches in Algorithmic
Game Theory.''}}

\author{Paul W. Goldberg\thanks{
Dept of Computer Science, University of Liverpool, Ashton Street,  Liverpool L69 3BX, U.K. E-mail: {\tt P.W.Goldberg@liverpool.ac.uk}} \and Carmine Ventre\thanks{School of Computing, Teesside University, Middlesbrough, TS1 3BA, U.K. E-mail: {\tt C.Ventre@tees.ac.uk}}}

\date{}

\maketitle

\begin{abstract}
There has been much recent work on the revenue-raising properties of truthful mechanisms for selling goods to selfish bidders. Typically the revenue of a mechanism is compared against a benchmark
(such as, the maximum revenue obtainable by an omniscient seller selling at a fixed price to at least two customers), with a view to understanding how much lower the mechanism's revenue is than the benchmark, in the worst case.
We study this issue in the context of {\em lotteries}, where the seller may sell a probability
of winning an item. We are interested in two general issues.
Firstly, we aim at using the true optimum revenue as benchmark for our auctions.
Secondly, we study the extent to which the expressive power resulting from lotteries, helps to improve the worst-case ratio. 
We study this in the well-known context of {\em digital goods}, where the production cost is zero.
We show that in this scenario, collusion-resistant lotteries (these are lotteries for which no coalition
of bidders exchanging side payments has an advantage in lying) are as powerful as truthful ones. 
\end{abstract}

\section{Introduction}
We consider a scenario in which a ``digital good'' is to be sold to many potential
buyers, with the objective of maximizing the revenue. A digital good is assumed to be provided with unlimited supply and to have no
cost of production. Given a set of selfish buyers who may have diverse valuations for the good,
a theoretical optimum for the revenue (commonly denoted \opt) is given by the sum of the buyers'
valuations.

In a standard mechanism, the allocation algorithm returns binary values so that a buyer would either win a copy of the good, or fail to do so.
Here we consider a more expressive class of mechanisms in which a buyer $i$ may be offered a
probability $\lambda_i$ of receiving the item; assuming that buyers are risk-neutral,
if $i$ has valuation $v_i$ for the item,
then $i$ would have valuation $v_i\lambda_i$ for the probability $\lambda_i$ to receive it.
The general question we consider is, to what extent does the expressiveness of lotteries help us to design truthful mechanisms that better approximate \opt?

\subsection{Model and preliminaries}\label{sec:preli}
We consider a setting in which we want to auction lotteries for digital goods \cite{GHKSW06}, i.e., goods with unlimited supply and no production costs.
A lottery $L = (\lambda, p)$ (for a specified item) is defined by its win probability
$\lambda\in[0,1]$ and its non-negative real-valued price $p$.
A bidder with valuation $v$ purchasing lottery $L$ has \emph{utility} $\lambda v - p$, i.e., his valuation for the item ``weighted'' by the win probability minus the lottery's price.

The model is defined as follows. There are $n$ selfish bidders, with \emph{true} valuations $v_1,\ldots,v_n$, who bid $b_1,\ldots,b_n$ respectively. A mechanism (also called auction) is a pair $(A, P)$: $A$ is an algorithm which on input $\b=(b_1,\ldots,b_n)$ returns a vector of $n$ win probabilities $(\lambda_1(\b), \ldots, \lambda_n(\b))$; $P(\b)$ is a vector of $n$ payment functions, i.e., how much each bidder is charged to buy the lottery. We focus on auctions which \emph{deterministically} propose lotteries $L_i=(\lambda_i(\b),p_i(\b))$ to each bidder $i$; bidders buy the lottery if their utility is non-negative, i.e., the lottery satisfies \emph{voluntary participation}.
Given that win probabilities and payments are deterministic, $i$'s utility $\lambda_iv_i-p_i$ is also deterministic. %as well. % in our model.
(That is, if the auction was repeated with the same bids, each bidder would be offered the same lottery, and the auctioneer's revenue would be the same. For a buyer, there remains the uncertainty regarding the win/lose outcome of the lottery itself.)

As mentioned in the discussion below about related works, theoretical motivations for studying this model of lotteries are related to the ones in \cite{BCKW10} and \cite{GHKSW06}. From a more practical point of view, the lotteries studied in the paper are perhaps better thought of as a model of uncertainty about what a buyer will receive. For example, in the context of TV advertising, advertisers pay to buy an ad slot for which the size of the exposed audience is uncertain but may be modeled with a probability distribution (via e.g. data on programs' audience shares). Also of relevance is Swoopo-style ``recreational shopping''. We believe that this model may suggest new kinds of sponsored search products, akin to TV advertising ones, that differ from those currently used by search engines.

We aim to design \emph{truthful} auctions (where bidders maximize their utility when telling their true valuations).
In this setting, from the definition of utility, it is immediate to see that we are dealing with 
%our bidders are 
one-parameter 
bidders 
\cite{AT01}.
Therefore, according to the characterization of truthful mechanisms for one-parameter bidders, we have to design \emph{monotone} lotteries,
%that is
i.e., lotteries for which the win probability is non-decreasing in the bid. Moreover, $p_i(\b)$, the amount bidder $i$ must be charged, 
must be of the form~\cite{AT01,GolHar05}:
\[
p_i(b_i, \bi) = h_i(\bi) + b_i \lambda_i(b_i, \bi) - \int_0^{b_i}\lambda_i(b,\bi)db,
\]
where 
$\bi$ denotes the bid vector $\b$ with all but the $i$-th entry and $h_i$ is any function depending on $\bi$ (but not $b_i$).
A stronger requirement is to demand \emph{collusion-resistant} auctions in which \emph{any} coalition of bidders maximizes the sum of the utilities of its members when they are truthtelling. Collusion-resistant lotteries are characterized in this context in terms of \emph{singular} allocations. A monotone win probability function $\lambda_i$ is singular if for all $i, b_i, \bi', \bi$, we have $\lambda_i(b_i, \bi)=\lambda_i(b_i, \bi')$.\footnote{Goldberg and Hartline
\cite{GolHar05} call these allocations \emph{posted-price mechanisms}. However, we prefer the name singular as it reflects better the semantics of the property in the context of lotteries.} A singular win probability $\lambda_i$ is then a function of $b_i$ only. Goldberg and Hartline
\cite{GolHar05} prove
the following theorem.

\begin{proposition}[\cite{GolHar05}]\label{prop:singular}
A lottery is collusion-resistant if and only if its win probability functions are singular.
\end{proposition}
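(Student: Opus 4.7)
My plan is to prove each direction of the equivalence separately.

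For the forward direction, singular implies collusion-resistant, I would substitute $\lambda_i(b_i, \bi) = \lambda_i(b_i)$ into the payment formula from the excerpt, obtaining bidder $i$'s utility $(v_i - b_i)\lambda_i(b_i) + \int_0^{b_i}\lambda_i(b)\,db - h_i(\bi)$. In the digital-goods setting, voluntary participation combined with non-negativity of payments pins down $h_i \equiv 0$ (the lottery at $b_i = 0$ must cost nothing while leaving utility non-negative), so each bidder's utility depends solely on their own report. Summing over any coalition $S$ then yields a separable objective $\sum_{i\in S}u_i(b_i)$, and the single-agent truthfulness already guaranteed by the payment formula maximizes each summand at $b_i = v_i$. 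Hence no coalitional deviation can strictly improve upon truth-telling.

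For the converse I would argue by contrapositive via a first-order derivative analysis. View the coalition's total utility $U(\b) = \sum_{i \in S}u_i(\b)$ as a function of the reports; collusion-resistance forces $\partial U / \partial b_j = 0$ at every truthful profile and every $j \in S$. Because the $(v_\ell - b_\ell)\lambda_\ell$ factors vanish at truth, the payment formula reduces this optimality condition to $\sum_{\ell \in S,\,\ell\neq j}\bigl[\int_0^{v_\ell}(\partial \lambda_\ell/\partial b_j)(b, v_{-\ell})\,db - (\partial h_\ell/\partial b_j)(v_{-\ell})\bigr] = 0$. Instantiating with the two-element coalition $S = \{\ell, j\}$ isolates a single summand; specializing to $v_\ell = 0$ forces $\partial h_\ell / \partial b_j \equiv 0$, and then differentiating the resulting identity in $v_\ell$ forces $\partial \lambda_\ell / \partial b_j \equiv 0$. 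Ranging over all pairs yields singularity of every $\lambda_\ell$.

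The main obstacle I anticipate is handling the non-smooth case, since $\lambda_i$ need only be monotone and measurable. The standard workaround is to replace derivatives by finite differences and use a limiting argument that exploits monotonicity in $b_\ell$ to control the errors. Alternatively, I would fall back to a direct constructive argument: non-singularity yields a bid $b_i$ and profiles $\bi, \bi'$ (which can be chosen to differ in a single coordinate) with $\lambda_i(b_i, \bi) > \lambda_i(b_i, \bi')$; placing true valuations at $\bi'$ and letting the coalition consisting of $i$ together with the deviating coordinate report $\bi$ instead produces a strict increase in $i$'s utility that outweighs the single-agent-truthfulness penalty paid by the lying coalition member, provided the gap and the valuation of $i$ are chosen to make the integral change dominate.
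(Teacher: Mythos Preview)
The paper does not prove this proposition; it is quoted from Goldberg and Hartline \cite{GolHar05} and stated without proof. There is therefore no in-paper argument to compare your attempt against.

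On the substance of your sketch: the forward direction is essentially correct once $h_i\equiv 0$ is in place, but your justification for $h_i\equiv 0$ appeals to evaluating the mechanism at $b_i=0$, which is outside the domains the paper actually works with (e.g.\ $[1,H]$ or $\{L,H\}$). With a nonzero $h_i(\bi)$ a singular $\lambda$ does \emph{not} yield collusion-resistance, so you should either assume this normalization explicitly (as the paper does in all its constructions) or give an argument valid on the relevant domain.

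For the converse, the derivative calculation handles the smooth case, but your fallback is only a hope, not an argument. Knowing $\lambda_i(b_i,\bi)>\lambda_i(b_i,\bi')$ at a single point $b_i$ does not control the change in $i$'s utility, which by the payment identity equals $\int_0^{v_i}\lambda_i(b,\cdot)\,db-h_i(\cdot)$; you would need the integrands to differ on a set of positive measure and then argue that, for a suitable choice of $v_i$, the resulting gain dominates both $j$'s individually-rational loss and the shift in $h_i$. That quantitative trade-off is exactly where the work in the Goldberg--Hartline proof lies, and your sketch does not carry it out.
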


We want incentive-compatible auctions which guarantee a good approximation of the optimal revenue of the auctioneer.
The optimal revenue is defined as $\opt = \sum_i v_i$ for bidders' valuations $v_1, \ldots, v_n$.
(For a given bid vector $\b$, we let $\opt(\b)=\sum_{i=1}^n b_i$.)
A mechanism approximates \opt\ within a ratio $r$ if the sum of the payments collected from the bidders is at least $\opt/r$.
Alternative benchmarks for comparisons are used in the literature, the most prominent being ${\cal F}^{(2)} = \max_{i=2,\ldots,n} i v_i$, assuming $v_i \geq v_{i+1}$~\cite{GHKSW06}.
${\cal F}^{(2)}$ measures the maximum revenue achievable with a fixed posted price, under the constraint that at least two items are sold. For ${\cal F}(\b)$ defined as $\max_{i=1,\ldots,n} i b_i$, with $b_i \geq b_{i+1}$, it is known that ${\cal F}(\b) = \Theta(\opt(\b)/\ln(n))$ and that for all $\b$, ${\cal F}(\b) \geq \opt(\b)/\ln(n)$~\cite{GHKSW06}. The focus on the rather technical benchmark $\ftwo$ is mainly motivated by the impossibility to approximate $\opt$ reasonably well (cf.~\cite{GHKSW06}).

The following observation indicates the %extra 
power given by lotteries in revenue maximization. If bidders are known to have 2 possible types, represented by a high valuation $H$ and a low valuation $L$, then \opt\ cannot be approximated better than about $H/L$ by any deterministic ``classical'' auction selling goods and not lotteries. By contrast, if the auctioneer is allowed to sell lottery tickets with win probabilities of $1/2$ and $1$ respectively, then a bidder with valuation $L$ would be willing to pay $L/2$ to buy the first kind of ticket; similarly, a bidder with valuation $H$ will pay $H/2$ to buy the ticket with win probability $1$. The revenue of such a lottery is then $\opt/2$.

\subsection{Our contribution}
Our results are summarized in Table \ref{table}.
When bids are known to come from a finite domain of size $d$, we establish that the optimal revenue may be approximated within factor $d$, but no lower constant factor is possible. Moreover, the upper bound of $d$ is attained by a straightforward mechanism that is both \emph{anonymous} (that is, an offer depends only on an agent's bid, and not on his identity) and singular (and so collusion-resistant). Meanwhile the lower bound applies to all truthful mechanisms, regardless of computational considerations. It is important to note that the lower bound identifies, for any $d$ and $\epsilon>0$, a set of ``well separated'' $d$ bids for which no truthful lottery can approximate $\opt$ better than $d-\epsilon$.\footnote{The fact that the bids identified by Theorem \ref{thm:finite:LB} are quite far apart from each other explains why for $D=[1,H]$ it is possible to approximate $\opt$ within $\ln(H)+1$.}

\begin{table*}[!t]
\centering
\begin{tabular}{l|c|c|}
\cline{2-3} & \rule{0ex}{14pt} Upper bound & Lower bound \\
%\cline{2-3} & Collusion-resistant & Truthful\\
\hline
\multicolumn{1}{|c|}{$D=\{L,H\}$} & \rule{0ex}{14pt} $\frac{2H-L}{H}^\star$ (Thm \ref{thm:2value:UB}) & $\frac{2H-L}{H}^\#$ (Thm \ref{thm:2value:LB}) \\
\hline
\multicolumn{1}{|c|}{$D=\{B_1, \ldots, B_d\}$} & \rule{0ex}{14pt} $d^\star$ (Thm \ref{thm:finite:UB}) & $d-\epsilon^\#$, any $\epsilon>0$ (Thm \ref{thm:finite:LB}) \\
\hline
\multicolumn{1}{|c|}{$D=[1,H]$} & \rule{0ex}{14pt} $\ln H+1^\star$ (Thm \ref{thm:cont:UB}) & $\ln(H)+1^\#$ (Thm \ref{thm:cont:lowerbound})\\
\hline
\end{tabular}
\caption{The bounds on the approximation guarantee of the revenue of incentive-compatible lotteries versus \opt\ as a function of bidders' domain; the bounds marked by `$\star$' hold for collusion-resistant lotteries, those marked by `$\#$' apply to truthful lotteries.\label{table}}
\end{table*}

The motivation to study two-value domains $\{L,H\}$, $L<H$, arises from the fact that one can provably achieve the best approximation guarantee with respect to $\ftwo$ under this assumption 
\cite{RSOPec09} (see below for more details). %Furthermore, these domains gain particular interest given that a precise bound, defined in terms of the actual values in the domain and capturing the difficulty of approximating \opt, can be proved. 
More generally, many real-life applications involve bidders with valuations from a finite domain. Money is, by its very nature, discrete with reasonable lower and upper bounds. Similarly, auctions on the web may collect bids through drop-down menus; the values available define a finite domain.

Regarding bids that may come from a {\em continuous} domain $[1,H]$, we obtain a tight bound of $\ln(H)+1$, where again the upper bound is obtained via a simple collusion-resistant lottery and the lower bound holds for any truthful lottery.

Our lower bounds measure the limitation of truthful mechanisms in terms of approximation guarantee to \opt\ independently of the number of bidders. Surprisingly, the best one can achieve when requiring incentive-compatibility is in fact obtainable by mechanisms having the stronger property of collusion-resistance. To the best of our knowledge, this is the first known case in which collusion-resistant mechanisms are as strong as truthful ones. This represents, in a sense, a first positive result on (deterministic) collusion-resistant mechanisms which have otherwise very limited power, as shown by the strong negative results in \cite{GolHar05,Sch00}. Let us note that, regardless of the size of the domain, there always exists a bid vector for which $\ftwo=\opt$ (e.g., a vector of all $L$ for binary domains) and therefore the approximation guarantees of our lotteries w.r.t. $\ftwo$ do not improve over known results in the worst case. (Nevertheless, this is not surprising as our lotteries are resistant to collusive behavior and not just truthful as the auctions designed to approximate $\ftwo$.) We leave open the problem of devising incentive-compatible auctions tailored to approximate $\ftwo$ since our main aim here is different: firstly, using \opt\ as benchmark and, secondly, showing the equivalence between two notions of incentive compatibility at different ends of the spectrum.

We use two different techniques to prove our results. The lower bound for continuous domains is obtained by adopting a probabilistic technique designed in \cite{GHKSW06} to bound from below the revenue of \emph{universally truthful} auctions. These are auctions defined as probability distributions over deterministic truthful auctions. To use the probabilistic technique of \cite{GHKSW06} we prove that there is a bijection between truthful lotteries and universally truthful auctions. However, the technique of \cite{GHKSW06} turns out to be not flexible enough to prove bounds which do not depend on the actual values in $d$-sized domains. Therefore,  Theorem~\ref{thm:finite:LB} needs the development of a new technique that we introduce in the simpler setting of two-value domains in the proof of Theorem~\ref{thm:2value:LB}. (The latter result can also be proved via the probabilistic proof technique given in \cite{GHKSW06}.) 
Our new lower bounding technique relies on the application of Carver's theorem \cite{Craver22} which characterizes inconsistent linear inequality systems in terms of certain linear combinations of the constraints of the system. Requiring to approximate \opt\ within a given ratio gives rise to a linear system with a particular structure; our proofs exploit this structure to define a suitable linear combination according to Carver's theorem. In setting up the system of linear inequalities we are able to ``hide'' the details given by the values in the domain and we can only focus on the asymptotic behavior of lotteries. Technically less involved proofs of Theorems~\ref{thm:2value:LB} and \ref{thm:finite:LB} would suffice to obtain the corresponding results in the case of fixed number of bidders %(e.g., $n=1$, a single bidder)
but would weaken the equivalence of truthful and collusion-resistant mechanisms to hold only in rather limited scenarios. 

Finally, let us note that due to the simplicity of our anonymous and singular lotteries, our upper bounds hold also in the online setting of \cite{KouPie10}, i.e., they hold in a setting 
in which bidders come online and a decision on the lottery to offer has to be made before the next bidder arrives.

\paragraph{Roadmap.} In Section \ref{sec:related} we review the related literature and, in particular, compare our model with that considered in previous revenue-maximizing auctions for digital goods. In Section \ref{sec:bounded:UB} we give the straightforward logarithmic upper-bound on approximability of \opt\ through lotteries when bidders bid from an interval $[1,H]$. Domains comprised of only two values are considered in Section \ref{sec:2value}. The results in this section are extended to any finite domain in Section \ref{sec:finite}. Finally, we consider the relation between our notion of lotteries and the concept of universally truthful auctions in Section \ref{sec:equivalence}, and use this result to prove the matching lower bound for continuous domains $[1,H]$. Section \ref{sec:conclusions} contains some concluding remarks.

\subsection{Related works}\label{sec:related}
This work is motivated by the results in \cite{BCKW10}. Briest et al. \cite{BCKW10} show how lotteries help in maximizing the revenue when designing envy-free prices. Here we address a similar type of question and aim at obtaining similar results for incentive-compatible lotteries.

Truthful lotteries defined above naturally relate to the truthful auctions for digital goods considered in \cite{GHKSW06}. The authors of
\cite{GHKSW06} show that no deterministic truthful auction can guarantee a reasonable approximation of ${\cal F}$ and therefore focus on {universally truthful} auctions. However, they also show that these auctions fail to guarantee any constant approximation of $\cal F$ (cf. Lemma 3.5 in \cite{GHKSW06}) and therefore the benchmark of interest becomes $\ftwo$. They define an interesting auction called Random Sampling Optimal Price (RSOP, for short) and prove that RSOP gives a (quite weak) constant approximation of ${\cal F}^{(2)}$; they also conjecture the right constant to be $4$. Better bounds are then proved in \cite{FFHK10,RSOPec09}; the latter work proves the conjecture %to be true 
when the number of winners is at least 6 and in general for two-valued domains.

Our lotteries are deterministic, although a degree of randomness is given by the probabilistic nature of the allocation. This random aspect can be seen to imply that our truthful lotteries are in fact equivalent to universally truthful auctions (e.g., the lottery of the example above can be seen as an uniform probability distribution over a deterministic auction which charges $L$ and one which charges $H$ for the item). Nevertheless, our equivalence proof also shows how lotteries can be seen as a (arguably more intuitive) reinterpretation of universally truthful auctions (e.g., the universally truthful counterpart of the lottery for $[1,H]$ domain involves a probability distribution over an infinite number of mechanisms). (See discussion at the end of Section~\ref{sec:equivalence} for details.) Because of the aforementioned equivalence, this research can also be seen as a continuation of \cite{GHKSW06} studying to which extent the knowledge of the domain helps in approximating \opt\ (e.g., Theorems \ref{thm:2value:UB} and \ref{thm:finite:UB} contradict the inapproximability result in \cite{GHKSW06}, i.e., Lemma 3.5 therein breaks down for finite domains).  

Another related work is \cite{GolHar05} which considers collusion-resist\-ant mechanisms for bidders with domains similar to ours. A characterization in terms of allocation rules is given (cf. Proposition \ref{prop:singular})
for randomized mechanisms which define allocations, payments and then utilities in expectation over the random coin tosses of the mechanism. Since in our lotteries we work with deterministic utilities (even, when considering lotteries as universally truthful auctions) this characterization %and the negative results in \cite{GolHar05} 
holds also in our setting. % as well.

A related concept is the {\em responsive lotteries}, studied in~\cite{FT10}, in which a single agent reports his valuations of a set of alternatives, and is awarded one of them, using probabilities designed to incentivize him to report his true valuations (up to affine rescaling). The difference here is that we have just one kind of item, and multiple agents.

Hart and Nisan \cite{HN12} use a model similar to ours (risk-neutral bidders and lottery offers) in their study of the optimal revenue when selling multiple items.

Other benchmarks are defined in the literature to compare the revenue of incentive-compatible auctions, see, e.g., \cite{Ngu11}. To the best of our knowledge, our work is the first in which revenue is compared to \opt.\footnote{However, in certain combinatorial settings, such as the one considered in e.g.~\cite{BBM08}, \opt, as social welfare, is used as benchmark for revenue maximization.}

\section{A simple collusion-resistant lottery}%Lotteries for $[1,H]$ domains}
\label{sec:bounded:UB}
In this section we assume that the bids belong to the interval $[1,H]$. We begin by proving an upper-bound on the revenue guarantee by a collusion-resistant lottery. We define $\lambda_i(\b)=\frac{\ln(e \cdot b_i)}{\ln(e \cdot H)},$  for any $i$ and $\b$. Therefore, since
$$
 \int_0^{b_i} \ln (e \cdot u) du = b_i \ln (b_i)
$$
we have
$$
 p_i(\b) = h_i(\bi) + \frac{1}{\ln(e \cdot H)}\left(\rule{0ex}{12pt} b_i \ln (e \cdot b_i) - b_i \ln (b_i)\right) = h_i(\bi) + \frac{b_i}{ \ln(e \cdot H)}.
$$
Setting $h_i(\bi)=0$ (the value of the functions $h_i$ has no consequence for the truthfulness of the auction given that it is independent of $b_i$), we have that the utility of agent $i$, when declaring $b_i$, is
$$
v_i \frac{\ln(b_i)}{\ln(e \cdot H)} - \frac{b_i}{\ln(e \cdot H)}
$$
which is maximized for $b_i=v_i$. Since each bidder pays a fraction $1/\ln(e \cdot H)$ of his bid, the revenue of this truthful auction is a $(\ln(H)+1)$-approximation of \opt.  Finally, observe that since $\lambda_i(\b)=\lambda(b_i)$ then the lottery is anonymous and singular and therefore collusion-resistant. Thus, we have the following result.

\begin{theorem}\label{thm:cont:UB}
There exists a $(\ln(H)+1)$-approximate anonymous collusion-resistant lottery for bidders whose valuations belong to the interval $[1,H]$.
\end{theorem}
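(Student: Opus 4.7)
My plan is to construct an explicit lottery by choosing the win probability function directly, then derive the corresponding truthful payment via the Archer--Tardos formula, and finally verify the three claims of the theorem: collusion-resistance, anonymity, and the approximation ratio.

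First I would set $\lambda_i(\b) := \ln(e\cdot b_i)/\ln(e\cdot H)$, a function that depends only on the bidder's own bid and satisfies $\lambda_i(\b)\in[1/\ln(eH),1]$ for $b_i\in[1,H]$, so that it is a valid probability and is non-decreasing in $b_i$ (hence monotone). Since $\lambda_i$ does not depend on $i$ the lottery is anonymous, and since it does not depend on $\bi$ it is singular; Proposition~\ref{prop:singular} then yields collusion-resistance for free, which is the structural part of the argument.

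Next I would plug this allocation into the Archer--Tardos payment formula stated in Section~\ref{sec:preli}. The key calculation is $\int_0^{b_i}\ln(e\cdot u)\,du=b_i\ln b_i$, which together with $b_i\lambda_i(b_i)=(b_i\ln b_i+b_i)/\ln(eH)$ yields, after choosing $h_i(\bi)\equiv 0$, the very simple payment $p_i(\b)=b_i/\ln(eH)$. I would then check voluntary participation by substituting into $\lambda_i v_i-p_i$ to get $v_i\ln v_i/\ln(eH)\ge 0$ for $v_i\ge 1$, which both confirms non-negative utility and implicitly re-confirms truthfulness via the one-parameter characterization.

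The final step is the revenue bound: summing $p_i(\b)=b_i/\ln(eH)$ over $i$ gives revenue $\opt(\b)/(\ln H+1)$, so the ratio $\ln H+1$ follows directly. No real obstacle is expected; the only subtle point is making sure the allocation stays in $[0,1]$ on the relevant domain $[1,H]$ and that the lower limit $0$ in the payment integral causes no trouble (it does not, since $\ln(eu)$ is integrable at $0$). Everything else is routine verification of the Archer--Tardos template.
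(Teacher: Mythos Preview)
Your proposal is correct and essentially identical to the paper's own argument: the same allocation $\lambda_i(\b)=\ln(e\cdot b_i)/\ln(e\cdot H)$, the same integral computation yielding $p_i(\b)=b_i/\ln(e\cdot H)$, and the same observation that singularity plus anonymity gives collusion-resistance while the payment sum gives the $(\ln H+1)$-approximation. The only cosmetic difference is that the paper verifies truthfulness by directly maximizing the utility expression in $b_i$, whereas you invoke the one-parameter characterization and separately check voluntary participation; both are equivalent here.
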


We defer to Section \ref{sec:equivalence}, the proof of a matching lower bound for any truthful lottery and bidders having $[1,H]$ domains.

\section{Two-value domains}\label{sec:2value}
In this section we assume that bidders' valuations are known to come from a 2-element set $\{L,H\}$, with $L<H$.
To begin with, we understand how to define lambda functions that lead to truthful lotteries and maximize the revenue. It turns out that some simple lambda functions that correspond to anonymous, collusion-resistant mechanisms guarantee a good approximation of the optimal revenue. We then prove (Theorem~\ref{thm:2value:LB}) that functions of this kind suffice to obtain optimal performance, even amongst mechanisms that need not be collusion-resistant, but just truthful.

We initially observe that by essentially the same arguments in~\cite{AT01} it is easy to show that a necessary condition to obtain truthfulness
is that $\lambda_i(H, \bi) \geq \lambda_i(L, \bi)$ for any $i$ and $\bi$.\footnote{We cannot blindly use the results in \cite{AT01} since the technical
assumption in that work is to have the lambda functions twice differentiable. This is not true for discontinuous functions,
like the lambda functions for two-value domains.} For a bidder $i$ with valuation $H$, the following truthfulness
constraint must be satisfied
$
H \lambda_i(H, \bi) - p_i(H, \bi) \geq H \lambda_i(L, \bi) - p_i(L, \bi),
$
which implies that $p_i(H, \bi) \leq H(\lambda_i(H, \bi)-\lambda_i(L, \bi)) + p_i(L, \bi)$.
To maximize the revenue we would like to set $p_i(H, \bi)=H(\lambda_i(H, \bi)-\lambda_i(L, \bi)) + p_i(L, \bi)$.
We next show that we can do that while guaranteeing the truthfulness of bidder $i$ having valuation $L$.
Indeed, the utility of such a bidder when lying and declaring $H$ is:
\begin{align*}
-p_i(H, \bi)+L \lambda_i(H)=-H(\lambda_i(H)-\lambda_i(L)) + L \lambda_i(H)-p_i(L, \bi) \leq L\lambda_i(L) - p_i(L, \bi)
\end{align*}
where the last inequality follows from $(L-H)(\lambda_i(H, \bi)-\lambda_i(L, \bi))\leq 0$. It remains to set a value for $p_i(L, \bi)$ to guarantee voluntary participation, i.e., to guarantee that a bidder with valuation $L$ buys the lottery; to achieve this, we set $p_i(L)=L\lambda_i(L, \bi)$.

From the above analysis, one could easily get an approximation guarantee of $2$ by setting $\lambda_i(H, \bi)=1$ and $\lambda_i(L, \bi)=1/2$ for any $i$ and $\bi$.
However, it is possible to do better as shown by the next theorem.
Below, we let $N_H(\b)$ (resp. $N_L(\b)$) be the set of bidders declaring $H$ (resp. $L$) in $\b$ and $n_H(\b)=|N_H(\b)|$;
similarly, $n_L(\b)$ denotes the number of $L$'s in $\b$.

\begin{theorem}\label{thm:2value:UB}
There exists an anonymous collusion-resistant lottery for two-value domains $\{L,H\}$, $L<H$, whose revenue is a $\frac{2H-L}{H}$-approximation of \opt.
\end{theorem}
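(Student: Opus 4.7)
The plan is to instantiate the general ``truthfulness recipe'' developed in the preceding discussion with concrete, constant values for the win probabilities, and to exploit anonymity/singularity so that each bidder's payment is a fixed fraction of her valuation.

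Concretely, I would propose a singular lottery in which $\lambda_i(\b)$ depends only on $b_i$, with $\lambda(H)=1$ and $\lambda(L)=\frac{H}{2H-L}$. Note $\lambda(L)\leq \lambda(H)$ since $H\geq L$, so the monotonicity requirement is satisfied, and by Proposition~\ref{prop:singular} the lottery is collusion-resistant; it is anonymous because the allocation is the same function of the bid for every $i$. Following the payment rule already derived in the text, I set $p(L)=L\lambda(L)$ (for voluntary participation of $L$-bidders) and $p(H)=H(\lambda(H)-\lambda(L))+p(L)$. Truthfulness for $H$-bidders holds by construction of $p(H)$, and truthfulness for $L$-bidders was already established in the general derivation preceding the theorem; I would just observe that the same argument applies verbatim when $\lambda_i$ is replaced by its singular counterpart.

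The key computation is then simply to plug in the values. One finds
\[
p(L)=\frac{HL}{2H-L},\qquad p(H)=H-\frac{H(H-L)}{2H-L}=\frac{H^{2}}{2H-L},
\]
so in particular $p(L)/L=p(H)/H=\frac{H}{2H-L}$. Consequently, for every bid vector $\b$ the revenue is
\[
R(\b)=n_{H}(\b)\,p(H)+n_{L}(\b)\,p(L)=\frac{H}{2H-L}\bigl(n_{H}(\b)\,H+n_{L}(\b)\,L\bigr)=\frac{H}{2H-L}\,\opt(\b),
\]
giving exactly the claimed $\frac{2H-L}{H}$-approximation (in fact with equality on every input, not just in the worst case).

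There is no real obstacle here; the only subtlety worth flagging is that I want to check that the choice $\lambda(L)=\frac{H}{2H-L}<1=\lambda(H)$ is actually the ``right'' one. One can motivate it either by writing $R$ as a linear combination $n_HH\lambda(H)+\lambda(L)(n_LL-n_H(H-L))$ and balancing the two extreme inputs (all-$L$ and all-$H$) against the benchmark $\frac{H}{2H-L}\opt$, or equivalently by demanding that the per-bidder contribution $p(v)/v$ be identical for $v\in\{L,H\}$; either route pins down $\lambda(L)$ uniquely. With that choice in hand, the proof reduces to the straightforward algebraic verification sketched above.
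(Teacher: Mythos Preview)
Your proposal is correct and essentially identical to the paper's own proof: both define the singular lottery $\lambda(H)=1$, $\lambda(L)=\frac{H}{2H-L}$ with the same payments, and verify that every bidder pays exactly $\frac{H}{2H-L}$ times her bid, yielding revenue $\frac{H}{2H-L}\,\opt(\b)$. The only cosmetic difference is that you compute $p(H)$ and $p(L)$ explicitly and observe the common ratio $p(v)/v$, whereas the paper sums the payments directly before simplifying.
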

\begin{proof}
We define $\lambda_i(L, \bi)=\frac{H}{2H-L}$ and $\lambda_i(H, \bi)=1$ for all $i$ and $\bi$. With the payment functions defined above, which guarantee truthfulness, the revenue collected by this lottery for a vector $\b$ is:
\begin{align*}
\sum_{i=1}^n p_i(\b) &= \sum_{i \in N_H(\b)} p_i(b_i, \bi) + \sum_{i \in N_L(\b)} p_i(b_i, \bi) \\ &= \sum_{i \in N_H(\b)} \left(H(\lambda_i(H, \bi)-\lambda_i(L, \bi)) + L \lambda_i(L, \bi)\rule{0ex}{3ex}\right) + \sum_{i \in N_L(\b)} L \lambda_i(L, \bi)\\
&= n_H(\b) H \left(1 - \frac{H}{2H-L}\right) + n_H(\b) L \frac{H}{2H-L} + n_L(\b) L \frac{H}{2H-L} \\
& = \left(n_H(\b)H+n_L(\b)L\rule{0ex}{3ex}\right) \frac{H}{2H-L}.
\end{align*}
The approximation guarantee follows from the observation that the optimum is defined as $n_H(\b)H+n_L(\b)L$. Finally, the collusion-resistance and the anonymity of the lottery follows from having defined $\lambda_i(\b)=\lambda(b_i)$ for all $i$ and $\b$.
\end{proof}

Next we show that no truthful lottery can improve on the ratio obtained by Theorem~\ref{thm:2value:UB}.
Lotteries that are truthful --- but not necessarily collusion-resistant --- allow $\lambda_i$ to depend also on the bid vector $\bi$.
The following result can be proved with the same logic in the case of $n=1$ (where the vector $\bi$ is trivial) with a significantly shorter proof\footnote{In such a case, the graph in Figure~\ref{fig:LB} would comprise only two nodes and then the combinatorics involved is straightforward. (Moreover, $n=1$ is of little significance given that there is no difference between truthfulness and collusion-resistance in this case.)}; we show here that it holds for any number of bidders $n$.  The proof follows a simple structure. We initially upper bound the payments of a truthful lottery in terms of the lambda functions. Then, for the lottery to approximate $\opt$ within a factor $\alpha$, a certain system of linear inequalities, with the lambda functions as variables, must be satisfied. Finally, we study the combinatorics of the system and use Carver's theorem \cite{Craver22} to determine the values of $\alpha$ for which the system admits solutions.

\begin{theorem}\label{thm:2value:LB}
No truthful lottery for two-value domains $\{L,H\}$, $L<H$, has approximation guarantee better than $\frac{2H-L}{H}$.
\end{theorem}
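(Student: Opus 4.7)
The plan follows the roadmap given just before the statement: characterize the payments forced by truthfulness, phrase the requirement ``revenue is at least $\opt/\alpha$'' as a linear system $\mathcal{S}_\alpha$ in the win-probability variables, and invoke Carver's theorem to show $\mathcal{S}_\alpha$ is infeasible whenever $\alpha<\frac{2H-L}{H}$. From truthfulness of an $H$-bidder against a deviation to $L$ together with voluntary participation at $L$, the largest payments consistent with truthfulness are $p_i(L,\bi)=L\lambda_i(L,\bi)$ and $p_i(H,\bi)=H\lambda_i(H,\bi)-(H-L)\lambda_i(L,\bi)$, so the revenue at any bid vector $\b$ obeys
\[
\sum_{i\in N_H(\b)}\!\bigl[H\lambda_i(H,\bi)-(H-L)\lambda_i(L,\bi)\bigr]+\!\!\sum_{i\in N_L(\b)}\!\!L\,\lambda_i(L,\bi)\ \ge\ \frac{\opt(\b)}{\alpha}.
\]
These inequalities, one per $\b\in\{L,H\}^n$, together with the bounds $\lambda_i(H,\bi)\le 1$ for every pair $(i,\bi)$, form $\mathcal{S}_\alpha$.

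To apply Carver I would exhibit non-negative multipliers $w_\b$ (one per revenue inequality, rewritten as $\le$) and $\mu_{i,\bi}$ (one per boundedness inequality) that zero out every $\lambda$-coefficient and produce a strictly negative total right-hand side. Direct bookkeeping shows that $\lambda_i(H,\bi)$ carries summed coefficient $-H\,w_{(H,\bi)}+\mu_{i,\bi}$, killed by $\mu_{i,\bi}=H\,w_{(H,\bi)}$; while $\lambda_i(L,\bi)$ carries summed coefficient $(H-L)\,w_{(H,\bi)}-L\,w_{(L,\bi)}$, forcing the ratio $w_{(L,\bi)}/w_{(H,\bi)}=(H-L)/L$ along every edge of the hypercube $\{L,H\}^n$ (the graph in Figure~\ref{fig:LB}). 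Since the hypercube is bipartite, a consistent solution exists iff $w_\b$ depends only on $k=n_H(\b)$, and the geometric choice $w_k=r^k$ with $r=L/(H-L)$ then satisfies every edge relation simultaneously.

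With these multipliers the combined right-hand side reduces to
\[
H\sum_{\b}n_H(\b)\,w_\b\ -\ \frac{1}{\alpha}\sum_{\b}\bigl(n_H(\b)H+n_L(\b)L\bigr)w_\b.
\]
Standard binomial identities give $\sum_{\b}n_H(\b)w_\b=nr(1+r)^{n-1}$ and $\sum_{\b}n_L(\b)w_\b=n(1+r)^{n-1}$, reducing the above expression to a scalar that is strictly negative iff $\alpha H<H+L/r=2H-L$, i.e.\ iff $\alpha<(2H-L)/H$. By Carver's theorem $\mathcal{S}_\alpha$ is infeasible in exactly this regime, and the theorem follows.

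The main obstacle I anticipate is the simultaneous cancellation of \emph{all} $\lambda$-variables: each $\lambda_i(L,\bi)$ couples two bid profiles that differ only in coordinate $i$, so the multipliers must be compatible on every hypercube edge at once. The bipartiteness of $\{L,H\}^n$ is precisely what makes the single geometric rule $w_k=r^k$ work globally and is what elevates the argument beyond the trivial two-node computation sketched in the footnote for $n=1$.
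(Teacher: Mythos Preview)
Your proposal is correct and follows essentially the same route as the paper's proof: both bound payments via truthfulness and voluntary participation, set up a linear system indexed by bid vectors $\b\in\{L,H\}^n$, and apply Carver's theorem with multipliers $w_\b=(L/(H-L))^{n_H(\b)}$ to force infeasibility whenever $\alpha<(2H-L)/H$. The only cosmetic differences are that the paper substitutes $\lambda_i(H,\bi)\le 1$ into the revenue inequalities before invoking Carver (rather than carrying separate multipliers $\mu_{i,\bi}$) and checks the sign of the constant term via a layer-by-layer argument on the hypercube rather than your closed-form binomial identities.
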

\begin{proof}
Consider a truthful lottery which has approximation guarantee better than $\alpha\geq 1$ for a 2-value domain $\{L,H\}$, $L<H$.
By voluntary participation for bidders with valuation $L$ we have
\begin{equation}\label{eq:vpl}
p_i(L, \bi)  \leq L \lambda_i(L, \bi).
\end{equation}
By truthfulness of bidders with valuation $H$, we have $p_i(H, \bi) - p_i(L, \bi) \leq H(\lambda_i(H,\bi)-\lambda_i(L,\bi))$, equivalently
\begin{equation}\label{eq:trh}
p_i(H, \bi)  \leq H\bigl(\lambda_i(H,\bi)-\lambda_i(L,\bi)\bigr)+p_i(L, \bi).
\end{equation}
Combining (\ref{eq:vpl}) and (\ref{eq:trh}) we get
\begin{equation}\label{eq:2value:paymentsUB}
p_i(H, \bi) \leq H \lambda_i(H,\bi)-(H-L)\lambda_i(L,\bi).
\end{equation}
%%
%%\begin{equation}\label{eq:2value:paymentsUB}
%%\begin{split}
%%p_i(L, \bi) & \leq L \lambda_i(L, \bi), \\
%%p_i(H, \bi) & \leq H(\lambda_i(H,\bi)-\lambda_i(L,\bi))+p_i(L, \bi) \leq H \lambda_i(H,\bi)-(H-L)\lambda_i(L,\bi).
%%\end{split}
%%\end{equation}
%%
%%For bid vector $\b$, let $N_H(\b)$ (resp. $N_L(\b)$) be the set of bidders declaring $H$ (resp. $L$) in $\b$,
%%and let $n_H(\b)$ and $n_L(\b)$ denote the number of elements of $N_H(\b)$ and $N_L(\b)$ respectively.
From (\ref{eq:2value:paymentsUB}) we have that for all bid vectors $\b$,
%we can conclude that for all bid vectors $\b$,
$$
\sum_{i \in N_H(\b)} \bigl(H \lambda_i(H,\bi)- (H-L) \lambda_i(L,\bi)\bigr)+\sum_{i \in N_L(\b)} L \lambda_i(L, \bi) \geq \sum_{i=1}^n p_i(\b).
$$
%%%%%
%We let $\delta_H^L$ be a shorthand for the difference $H-L$ and %$f(\alpha)$ be the ratio $(\alpha-1)/\alpha$.
%%%%%%
Now, noting that $\sum_{i=1}^n p_i(\b) > \frac{\opt(\b)}{\alpha} = \frac{n_H(\b) H + n_L(\b)L}{\alpha}$,
we can rewrite this as
\[
\sum_{i \in N_H(\b)} \left(H \lambda_i(H,\bi)- (H-L) \lambda_i(L,\bi)\right)+\sum_{i \in N_L(\b)} L \lambda_i(L, \bi) > \frac{n_H(\b) H + n_L(\b)L}{\alpha}.
\]
%
%%\begin{align*}
%%\sum_{i \in N_H(\b)} \left(H \lambda_i(H,\bi)- (H-L) \lambda_i(L,\bi)\right)+\sum_{i \in N_L(\b)} L \lambda_i(L, \bi) \geq \sum_{i=1}^n p_i(\b) > \frac{\opt(\b)}{\alpha} = \frac{n_H(\b) H + n_L(\b)L}{\alpha},
%%\end{align*}
Rearranging the above and noting that $\lambda_i(H,\bi) \leq 1$ for all $i$ and $\bi$, we have the following system of $2^n$ linear inequalities:
\begin{align}\label{eq:2value:LB:ineqsys}
-(H-L) \sum_{i \in N_H(\b)} \lambda_i(L,\bi) + L \sum_{i \in N_L(\b)} \lambda_i(L, \bi) > -\frac{n_H(\b)\alpha - n_H(\b)}{\alpha} H + \frac{n_L(\b)}{\alpha} L \ \ \ \textrm{ for all } \b.
\end{align} 
There are $n \cdot 2^{n-1}$ variables $\{\lambda_i(L,\bi):i\in[n],\bi\in\{L,H\}^{n-1}\}$ in the above system.
Notice that each variable $\lambda_i(L, \bi)$ occurs only twice with a coefficient different from $0$: in particular, its coefficient is $0$ in all the constraints relative to a bid vector $\b=(\cdot, \bi')$ with $\bi' \neq \bi$; for $\b=(H, \bi)$, $\lambda_i(L, \bi)$ has coefficient $-(H-L)$ since $i \in N_H(\b)$; finally, for $\b=(L, \bi)$, the variable has coefficient $L$ since $i \in  N_L(\b)$.

In order to prove the theorem we want to study the values of $\alpha$ for which this system has no solutions. Towards this end, we let $x_i(\bi)$ be a shorthand for $\lambda_i(L,\bi)$ and number all the possible $m=2^n$ bid vectors $\b$ as $\b^{(1)}, \ldots, \b^{(m)}$. Then we denote by $\Gamma^{(j)}(x)$ the terms involved in the $j$-th constraint of (\ref{eq:2value:LB:ineqsys}) relative to $\b^{(j)}$, i.e.,
\[
\Gamma^{(j)}(x) := -(H-L) \sum_{i \in N_H(\b^{(j)})} x_i(\bi^{(j)}) + L \sum_{i \in N_L(\b^{(j)})} x_i(\bi^{(j)}) + \frac{n_H(\b^{(j)})\alpha-n_H(\b^{(j)})}{\alpha}H - \frac{n_L(\b^{(j)})}{\alpha}L.
\]
We use Carver's theorem \cite{Craver22} which characterizes inconsistent systems of linear inequalities: according to \cite{Craver22}, (\ref{eq:2value:LB:ineqsys}) above admits no solution if and only if we can find $m+1$ non-negative constants $k_j$, such that
\begin{equation}\label{eq:2value:LB:carverineq}
\sum_{j=1}^{m} k_j \Gamma^{(j)}(x) + k_{m+1} \equiv 0,
\end{equation}
with at least one of the $k$'s being positive. We next show how to define the constants $k_1,\ldots, k_m$ so that the two occurrences with non-null coefficients of each variable $x_i(\bi)$ cancel out. %This will give some constraints on the value $\alpha$ has to take for the existence a of nonnegative value of $k_{m+1}$ satisfying (\ref{eq:2value:LB:carverineq}).

\begin{figure}[t]
\centering
\includegraphics[scale=0.7]{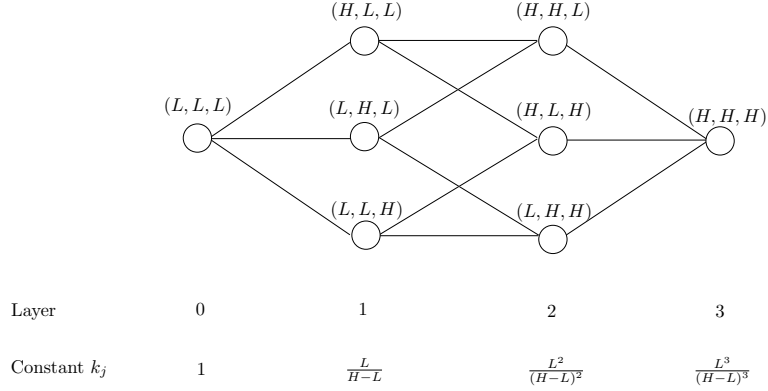}
\caption{The graph constructed for $n=3$.}\label{fig:LB}
\end{figure}

To show how to define the constants $k_j$ we define a graph which has a vertex for each possible bid vector. We put an edge between two vertices if the corresponding bid vectors are \emph{adjacent}, i.e., they differ in only one entry. Each vertex then has exactly $n$ neighbors. This is a layered graph with $n+1$ layers. Let layer $\ell$ be the set of all vertices whose corresponding bid vectors are comprised of $\ell$ $H$'s and $n-\ell$ $L$'s. The graph is indeed layered as by definition a node at layer $\ell$ only has neighbors at layer $\ell-1$ and $\ell+1$. For a bid vector $\b^{(j)}$ whose associated vertex lies in layer $\ell$ of the graph we define $k_j=\frac{L^\ell}{(H-L)^{\ell}}$. The construction is depicted in Figure \ref{fig:LB} for the case of $n=3$. To show that this definition of constants cancels out the $x$'s in (\ref{eq:2value:LB:carverineq}) consider a variable $x_i(\bi)$. The two occurrences of $x_i(\bi)$ with a non-zero coefficient are for the bid vectors $\b_1=(H,\bi)$, in which case $x_i(\bi)$ has coefficient $-(H-L)$, and $\b_2=(L,\bi)$, in which case $x_i(\bi)$ has coefficient $L$. The two vertices associated to $\b_1$ and $\b_2$ are by definition connected, with $\b_1$ being a node at layer $n_H(\b_1)$ and $\b_2$ being a node at layer $n_H(\b_2)$. By construction, we have $n_H(\b_2)=n_H(\b_1)-1$ and therefore, letting $\ell=n_H(\b_1)-1$, we have that the constant $k$ corresponding to $\b_1$ is $\frac{L^{\ell+1}}{(H-L)^{\ell+1}}$ while the constant corresponding to $\b_2$ is $\frac{L^{\ell}}{(H-L)^{\ell}}$. This means that the contribution of variable $x_i(\bi)$ to (\ref{eq:2value:LB:carverineq}) is: $$\left(L \frac{L^\ell}{(H-L)^{\ell}} - (H-L) \frac{L^{\ell+1}}{(H-L)^{\ell+1}}\right)x_i(\bi)=0.$$ Then, the following holds:
\[
\sum_{j=1}^m k_j \Gamma^{(j)}(x) = \sum_{j=1}^m k_j \Gamma^{(j)} = \sum_{j=1}^m k_j \left(\frac{n_H(\b^{(j)})\alpha -  n_H(\b^{(j)})}{\alpha} H - \frac{n_L(\b^{(j)})}{\alpha} L\right).
\]
We now study for which values of $\alpha$ the above sum is not positive. %; in such a case, a nonnegative value of $k_{m+1}$ which satisfies (\ref{eq:2value:LB:carverineq}) exists.

Consider all the nodes at layer $\ell$ of the graph. We call this set of nodes $S_\ell$. We abuse the notation and say that $\b \in S_\ell$ to mean that the node corresponding to $\b$ is at layer $\ell$ of the graph; we then rewrite the summation above and impose it to be less than or equal than $0$:
\begin{align*}
\sum_{j=1}^m k_j \Gamma^{(j)} = \sum_{\ell=2}^{n+1} \left( -\sum_{\b^{(j)} \in S_{\ell-1}} k_j \frac{n_L(\b^{(j)})}{\alpha}L + \sum_{\b^{(j)} \in S_{\ell}} k_j \frac{\alpha n_H(\b^{(j)}) - n_H(\b^{(j)})}{\alpha} H \right) \leq 0.
\end{align*}
If the whole summation is non-positive, then there exists at least one layer $\ell$ for which the inner summation is non-positive. The bid vectors $\b$ in $S_\ell$ have exactly $\ell$ $H$'s, i.e., $n_H(\b)=\ell$ and $n_L(\b)=n-\ell$, and then their number is $n \choose \ell$. Moreover,  all those bid vectors have the constant $k$ set to $(L/(H-L))^\ell$. Therefore, we have
{\allowdisplaybreaks
\begin{align*}
0 & \geq -\sum_{\b^{(j)} \in S_{\ell-1}} k_j \frac{n_L(\b^{(j)})}{\alpha}L + \sum_{\b^{(j)} \in S_{\ell}} k_j \frac{\alpha n_H(\b^{(j)}) - n_H(\b^{(j)})}{\alpha} H &
\\ & = - {n \choose {\ell-1}} \frac{L^{\ell-1}}{(H-L)^{\ell-1}} \frac{n-\ell+1}{\alpha}L +{n \choose {\ell}} \frac{L^{\ell}}{(H-L)^{\ell}} \frac{\alpha \ell - \ell}{\alpha} H & \Longleftrightarrow \\0 & \geq  -\frac{n(n-1)\cdots(n-\ell+1)}{(\ell-1)!} \frac{L}{\alpha} + \frac{n(n-1)\cdots(n-\ell+1)}{\ell!} \frac{L}{(H-L)} \frac{\alpha \ell -\ell}{\alpha} H &\Longleftrightarrow \\
0 & \geq - \frac{1}{\alpha} + \frac{1}{(H-L)} \frac{\alpha -1}{\alpha} H,
\end{align*}}%
which implies that $\alpha \leq \frac{2H-L}{H}$. This means that for these values of $\alpha$, the weighted sum of the known terms in (\ref{eq:2value:LB:carverineq}) is non-positive. Therefore, there exists a non-negative constant $k_{m+1}$ which together with the constants $k_1, \ldots, k_m$ defined above satisfies (\ref{eq:2value:LB:carverineq}). In other words, for $\alpha \leq \frac{2H-L}{H}$, the system (\ref{eq:2value:LB:ineqsys}) has no solutions and therefore no truthful lottery can guarantee better approximation ratios.
\end{proof}

\section{The case of finite domains}\label{sec:finite}
Following the same approach used for two-value domains, one could study three-valued domains $\{L,M,H\}$, $L<M<H$. With such a study, one would prove an upper bound of $\frac{3HM-HL-M^2}{HM}$ (this is done by setting $\lambda_i(H)=1, \lambda_i(M)=\frac{2MH+LH}{3HM-HL-M^2}$ and $\lambda_i(L)=\frac{HM}{3HM-HL-M^2}$) and a matching lower bound (for, e.g., $n=2$ this is not hard to check using Carver's theorem). However, we prefer to focus on asymptotic bounds (on the approximability of \opt) in terms of the number of allowed bid values in the domain, as opposed to detailed bounds in terms of those values. That is the goal of this section.
We begin by proving that it is possible to design collusion-resistant lotteries collecting a $1/|D|$ fraction of the optimal revenue when bidders bid from a finite domain $D$.

\begin{theorem}\label{thm:finite:UB}
There exists an anonymous collusion-resistant lottery for finite domains $\{B_1, \ldots, B_d\}$, $B_1> \ldots > B_d$, whose revenue is a $d$-approximation of \opt.
\end{theorem}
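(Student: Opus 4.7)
The plan is to exhibit an anonymous singular lottery that charges each bidder bidding $B_k$ exactly $B_k/d$. The total revenue on any bid vector $\b$ would then equal $\sum_i b_i/d=\opt(\b)/d$, giving the claimed $d$-approximation, while Proposition~\ref{prop:singular} together with anonymity delivers collusion-resistance. So it suffices to find a single function $\lambda:\{B_1,\ldots,B_d\}\to[0,1]$, non-decreasing in the bid, together with a payment rule $p$ satisfying IC and voluntary participation, such that $p(B_k)=B_k/d$ for every $k$; the mechanism will then set $\lambda_i(\b):=\lambda(b_i)$ and $p_i(\b):=p(b_i)$.

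The construction follows the discrete counterpart of the analysis used in Theorem~\ref{thm:2value:UB}. Local IC between consecutive values $B_k>B_{k+1}$ forces
\[
B_{k+1}\bigl(\lambda(B_k)-\lambda(B_{k+1})\bigr)\le p(B_k)-p(B_{k+1})\le B_k\bigl(\lambda(B_k)-\lambda(B_{k+1})\bigr),
\]
and voluntary participation at the bottom requires $p(B_d)\le B_d\lambda(B_d)$. I would set both upper bounds tight, $p(B_k)-p(B_{k+1})=B_k(\lambda(B_k)-\lambda(B_{k+1}))$ and $p(B_d)=B_d\lambda(B_d)$, and then pick
\[
\lambda(B_k)\;:=\;\sum_{j=k}^{d}\frac{B_j-B_{j+1}}{d\,B_j},\qquad\text{with the convention }B_{d+1}:=0,
\]
so that $\lambda(B_d)=1/d$ and $\lambda(B_k)-\lambda(B_{k+1})=(B_k-B_{k+1})/(dB_k)$. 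A telescoping from $p(B_d)=B_d/d$ then yields $p(B_k)=B_k/d$, as wanted. Since $\lambda$ depends only on the bidder's own bid, the resulting mechanism is singular and anonymous.

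To close the argument I would verify three points. Monotonicity of $\lambda$ in the bid is immediate from $\lambda(B_k)-\lambda(B_{k+1})>0$. The bound $\lambda(B_k)\le 1$ follows because each summand equals $(1-B_{j+1}/B_j)/d\le 1/d$ and there are at most $d$ of them. Voluntary participation at bid $B_k$ reduces to $\lambda(B_k)\ge 1/d$, which holds since $\lambda$ is non-increasing in $k$ and $\lambda(B_d)=1/d$. For global IC (deviations to non-adjacent bids), the standard one-parameter telescoping argument combined with $B_k\ge B_j$ for $j\ge k$ (and the symmetric fact $B_\ell\le B_j$ for $j\le\ell$ used for upward deviations) upgrades the binding local IC constraints to IC against any bid level, so the entire mechanism is truthful.

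The only real subtlety is choosing a single $\lambda$ that realises the target payment $B_k/d$ at \emph{every} level simultaneously while remaining in $[0,1]$ and monotone. A flat choice like $\lambda(B_k)=1/d$ would extract only $B_d/d$ from top bidders, while starting from $\lambda(B_1)=1$ and working downward may force $\lambda$ below $0$ when the $B_k$'s are spread out; the weighted telescoping formula above is what reconciles these constraints and makes the $d$-approximation tight on every bidder individually.
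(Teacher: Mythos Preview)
Your proof is correct, but the explicit lottery you exhibit differs from the paper's. The paper uses the value-independent allocation $\lambda(B_k)=(d-k+1)/d$, so that every step $\lambda(B_k)-\lambda(B_{k+1})$ equals $1/d$; the resulting payment is $p(B_t)=\sum_{j=t}^{d} B_j/d\ge B_t/d$, and the $d$-approximation follows from this inequality rather than from an equality. You instead tailor the step sizes to the domain, setting $\lambda(B_k)-\lambda(B_{k+1})=(B_k-B_{k+1})/(dB_k)$ so as to hit $p(B_k)=B_k/d$ exactly for every $k$; this makes the ratio tight on each bidder individually and yields revenue exactly $\opt/d$. Both mechanisms are anonymous and singular, hence collusion-resistant by Proposition~\ref{prop:singular}, and your verification of monotonicity, the range $[0,1]$, voluntary participation, and global IC via the standard telescoping argument is sound. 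The paper's choice is arguably simpler (the $\lambda$-values do not depend on the $B_k$'s) and in fact collects strictly more revenue than yours whenever some bid exceeds $B_d$. One small quibble with your closing commentary: ``starting from $\lambda(B_1)=1$ and working downward may force $\lambda$ below $0$'' is not the real obstruction---the paper's lottery does start at $\lambda(B_1)=1$ and stays in $[1/d,1]$; what is incompatible is insisting simultaneously on $\lambda(B_1)=1$ \emph{and} the exact payment $p(B_k)=B_k/d$ at every level.
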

\begin{proof}
We define the anonymous, singular lottery $\lambda(B_i)=(d-i+1)/d$ and the corresponding payment functions $p_i(B_t, \bi)=\sum_{j=t}^d B_j(\lambda(B_j)-\lambda(B_{j+1}))=\sum_{j=t}^d \frac{B_j}{d}$ for $1 \leq t < d$ where we set $\lambda(B_{d+1})=0$. Let us show that these payment functions indeed lead to collusion-resistance. We have to prove that for any bidder $i$ with valuation $B_t$ and any $B_h \in D$:
\[
B_t \lambda(B_t) - p_i(B_t, \bi) \geq  B_t \lambda(B_h) - p_i(B_h, \bi).
\]
By definition this is equivalent to proving:
\[
B_t (h-t) - \sum_{j=t}^d B_j + \sum_{j=h}^d B_j \geq 0.
\]
Now, whenever $h \leq t$, the left-hand side of the above inequality is equal to $B_t(h-t) - \sum_{j=t}^{h-1} B_j$ which is non-negative since $B_j \leq B_t$, $j \geq t$. Similarly, in the case in which $h > t$, the left-hand side of the above inequality equals $B_t(h-t) - \sum_{j=h}^{t-1} B_j$ which is non-negative since $B_j \geq B_t$, $j \leq t$.

To prove the approximation guarantee we note that for all bid vectors $\b$, we have
\begin{align*}
\sum_{i=1}^n p_i(\b) &= \sum_{t=1}^d \sum_{i \in N_t(\b)} p_i(\b) \geq \sum_{t=1}^d n_t(\b) B_t \left(\lambda(B_t)-\lambda(B_{t+1}\rule{0ex}{3ex})\right) = \sum_{t=1}^d \frac{n_t(\b) B_t}{d} = \frac{\opt(\b)}{d},
\end{align*}
$N_t(\b)$ denoting the set of bidders bidding $B_t$ in $\b$ and $n_t(\b)$ being the size of $N_t(\b)$.
\end{proof}

We cannot improve over the above result even by relaxing the collusion-resistance to truthfulness, as shown by next theorem. The proof of the lower bound below is not a simple extension of the arguments used for two-value domains. Although the structure of the proof is similar, the difficulty here rests on the fact that the layered graph is not adequate to model the definition of the constants required by Carver's theorem. Moreover, the study of the weighted sum of the known terms of the system is significantly more involved. % and relies on an inductive argument on the size of the domain.

\begin{theorem}\label{thm:finite:LB}
For any $d$ and $\epsilon>0$, there exist bids $B_1>B_2> \ldots>B_d>0$ such that no truthful lottery for the domain $D=\{B_1, \ldots, B_d\}$ has approximation guarantee better than $d-\epsilon$.
\end{theorem}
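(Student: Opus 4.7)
My plan is to generalize the Carver-theorem argument of Theorem~\ref{thm:2value:LB} from the hypercube $\{L,H\}^n$ to the $d$-ary lattice $\{B_1,\ldots,B_d\}^n$. Starting from the truthfulness inequality for an $i$ bidding $B_t$ against a declaration of $B_{t+1}$ and iterating down to $t=d$, then combining with voluntary participation at the bottom level, the chain telescopes to
\[
p_i(B_t,\bi)\leq B_t\,\lambda_i(B_t,\bi)-\sum_{j=t+1}^{d}(B_{j-1}-B_j)\,\lambda_i(B_j,\bi).
\]
Summing over $i$, bounding $\lambda_i(B_1,\bi)\leq 1$ to absorb every top-level $\lambda$ into a constant, and requiring $\sum_i p_i(\b)>\opt(\b)/\alpha$ produces a system of $d^n$ strict linear inequalities in the variables $\{\lambda_i(B_t,\bi):i\in[n],\,t\geq 2,\,\bi\in D^{n-1}\}$ whose right-hand side for bid vector $\b$ is $\opt(\b)/\alpha-N_1(\b)\,B_1$, where $N_t(\b)$ counts the agents bidding $B_t$.

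Carver's theorem then calls for non-negative multipliers $k_\b$ that cancel every variable. Since $\lambda_i(B_t,\bi)$ appears with coefficient $B_t$ in the constraint for $\b=(B_t,\bi)$ and coefficient $-(B_{t-1}-B_t)$ in each constraint for $\b=(B_s,\bi)$ with $s<t$, the cancellation condition collapses to the recurrence
\[
B_t\, k_{(B_t,\bi)} \;=\; (B_{t-1}-B_t)\sum_{s=1}^{t-1}k_{(B_s,\bi)}\qquad(t\geq 2).
\]
This is where the layered graph of the $d=2$ argument breaks: a node in the composition lattice is now linked simultaneously to many neighbours $N-e_t+e_s$. Postulating that $k_\b$ depends only on the composition $(N_1(\b),\ldots,N_d(\b))$, I would guess the product solution $k_\b=\prod_{t=1}^d f_t^{N_t(\b)}$ with $f_1=1$ and $f_t=B_1(1/B_t-1/B_{t-1})$ for $t\geq 2$, and verify by induction that the identity $\sum_{s=1}^{t}f_s=B_1/B_t$ reduces every instance of the recurrence to a single algebraic condition on $f_t$. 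The resulting $k_\b$ are strictly positive.

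The harder obstacle is the evaluation of $\sum_\b k_\b\bigl(\opt(\b)/\alpha-N_1(\b)B_1\bigr)$, which Carver's theorem demands to be non-negative. Grouping bid vectors by composition, every sum that appears has the shape $\sum_{N}\binom{n}{N_1,\ldots,N_d}N_t\prod_s f_s^{N_s}$, which equals $n\,f_t(\sum_s f_s)^{n-1}=n\,f_t(B_1/B_d)^{n-1}$ after differentiating the multinomial expansion of $(\sum_s f_s)^n$ with respect to $f_t$. All $n$-dependence cancels between the two sides and the condition becomes
\[
\alpha \;\leq\; \frac{1}{B_1}\sum_{t=1}^{d} B_t f_t \;=\; d-\sum_{t=2}^{d}\frac{B_t}{B_{t-1}}.
\]
Choosing the bids in geometric progression of common ratio $\epsilon/(d-1)$ (for instance $B_d=1$ and $B_{t-1}=B_t\cdot(d-1)/\epsilon$) makes $\sum_{t=2}^{d}B_t/B_{t-1}=\epsilon$, so the inequality system is inconsistent for every $\alpha\leq d-\epsilon$ and the theorem follows. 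The real technical effort lies in pinning down the multinomial sums above and in verifying that the product ansatz for $k_\b$ is the unique choice compatible with all the $(i,\bi,t)$ recurrences simultaneously.
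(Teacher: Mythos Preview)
Your argument is correct and in fact slightly sharper than the paper's. The paper does not keep the full telescoped payment bound $p_i(B_t,\bi)\leq B_t\lambda_i(B_t,\bi)-\sum_{j>t}(B_{j-1}-B_j)\lambda_i(B_j,\bi)$; instead it relaxes the tail via $\lambda_i(B_j,\bi)-\lambda_i(B_{j+1},\bi)\leq 1$ to obtain $p_i(B_t,\bi)\leq B_t\bigl(\lambda_i(B_t,\bi)-\lambda_i(B_{t+1},\bi)\bigr)+\rho_\epsilon B_t$, after fixing the bid ratios in advance through $\rho_\epsilon=\epsilon/(d(d-\epsilon))$. This preserves the two-occurrences-per-variable structure of the binary case, so the Carver constants become $k_\b=\prod_{r<d}(B_d/B_r)^{n_r(\b)}$ and the weighted sum of constant terms is shown to vanish exactly. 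Your route retains all occurrences of each $\lambda_i(B_t,\bi)$, which forces the more intricate recurrence $B_t\,k_{(B_t,\bi)}=(B_{t-1}-B_t)\sum_{s<t}k_{(B_s,\bi)}$, but the product solution with $f_t=B_1(1/B_t-1/B_{t-1})$ and the telescoping identity $\sum_{s\leq t}f_s=B_1/B_t$ handle it cleanly. What you gain is a closed-form lower bound $d-\sum_{t\geq 2}B_t/B_{t-1}$ valid for \emph{every} decreasing sequence $B_1>\cdots>B_d$, with the theorem recovered by a geometric choice; the paper's simplification requires committing to the bid sequence before setting up the linear system. Both proofs use the same multinomial-derivative identity to collapse the $n$-dependence of the constant-term sum. (Your closing remark that the product ansatz must be shown \emph{unique} is unnecessary: Carver's theorem only asks for existence of suitable multipliers.)
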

\begin{proof}
By truthfulness constraints any truthful lottery must satisfy the following upper bounds on the payments, for $1 \leq t <d$.
Recall that $p_i$ and $\lambda_i$ map bid vectors to bidder $i$'s payment and win probability, respectively.
\begin{align*}
p_i(B_t, \bi) & \leq B_t (\lambda_i(B_t,\bi)-\lambda_i(B_{t+1},\bi))+p_i(B_{t+1}, \bi) \\ & \leq B_t \lambda_i(B_t, \bi) + \sum_{j=t+1}^d \left( -(B_{j-1} - B_{j}) \lambda_i(B_{j}, \bi) \right)\\
& = \sum_{j=t}^{d-1} B_{j}\left( \lambda_i(B_{j}, \bi) - \lambda_i(B_{j+1}, \bi) \right) + B_d \lambda_i(B_d, \bi)\\
& \leq B_t (\lambda_i(B_t,\bi)- \lambda_i(B_{t+1},\bi))+ \sum_{j=t+1}^{d} B_{j},
\end{align*}
where in the second inequality we recursively use the bound obtained on $p_i(B_t, \bi)$ to $p_i(B_{t+1}, \bi)$, $p_i(B_{t+2}, \bi)$ and so on, and where the last inequality follows from the fact that $0 \leq \lambda_i(B_{j}, \bi) - \lambda_i(B_{j+1}, \bi) \leq 1$ given by truthfulness and by definition, respectively. We let $B_d$ be a positive value and then define the bids of the domain to satisfy\footnote{Note that this is a feasible definition as for the meaningful values of $\epsilon$, i.e., $d > \epsilon>0$, $\frac{\rho_\epsilon}{d-t-1} < 1$ and therefore we are only quantifying the ``gap'' between two consecutive bids of the domain.} $\frac{B_{t}}{B_{t-1}}\leq \frac{\rho_\epsilon}{d-t+1}$, for $1 < t \leq d$, where $\rho_\epsilon := \frac{\epsilon}{d(d-\epsilon)}$ and get that, for $1 \leq t <d$, it holds
\[
p_i(B_t, \bi) \leq B_t (\lambda_i(B_t,\bi)- \lambda_i(B_{t+1},\bi))+  \rho_\epsilon B_{t},
\]
since $\sum_{j=t+1}^d B_j \leq (d-t) B_{t+1} \leq \rho_\epsilon B_t$. We also note that by voluntary participation, we have
$
p_i(B_d, \bi) \leq B_d \lambda_i(B_d,\bi) \leq B_d \lambda_i(B_d,\bi) +  \rho_\epsilon B_{d}.
$
We can now bound from above the revenue of a truthful lottery. To ease the notation we set $\lambda_i(B, \bi)=0$ for any $i, \bi$, and $B \not\in D$. We then get
\begin{align*}
\sum_{i=1}^n p_i(\b) &\leq \sum_{t=1}^d \sum_{i \in N_t(\b)} \left(B_t \left(\rule{0ex}{3ex} \lambda_i(B_t, \bi) - \lambda_i(B_{t+1}, \bi)\right) + \rho_\epsilon B_{t} \right) \\ & = \sum_{t=1}^d \sum_{i \in N_t(\b)}  B_t \left(\rule{0ex}{3ex} \lambda_i(B_t, \bi) - \lambda_i(B_{t+1}, \bi)\right) + \sum_{t=1}^d n_t(\b) B_t \rho_\epsilon,
\end{align*}
where, as above, $N_t(\b)$ is the set of bidders bidding $B_t$ in $\b$ and $n_t(\b)=|N_t(\b)|$. We now assume by contradiction that a truthful lottery has approximation guarantee better than $d-\epsilon$ for the domain $D$ as in the hypothesis. By noticing that $\lambda_i(B_1, \bi) \leq 1$ for all $i$ we then obtain the following system of linear inequalities
\begin{align}\label{eq:finite:LB:ineqsys}
\sum_{t=1}^d \sum_{i \in N_t(\b)}  B_t \left(\rule{0ex}{3ex} \sigma_t \lambda_i(B_t, \bi) - \lambda_i(B_{t+1}, \bi)\right) > -\frac{d-1}{d} n_1(\b) B_1+ \sum_{t=2}^{d} \frac{n_t(\b)B_t}{d}
\end{align}
for all $\b$, where $\sigma_t=1$ if $t>1$ and $0$ otherwise. For $d \geq t>1$, each variable $\lambda_i(B_t,\bi)$ in (\ref{eq:finite:LB:ineqsys}) has only two occurrences with a coefficient different from zero. Indeed, it appears with a coefficient of $-B_{t-1}$ in the constraint relative to the bid vector $(B_{t-1}, \bi)$ and has a factor of $B_t$ in the constraint of $(B_t, \bi)$. The constraints relative to all the other bid vectors have $\lambda_i(B_t,\bi)$ with a zero coefficient. Similarly to the binary-domain case, we enumerate all the possible $m=d^n$ bid vectors, $\b^{(1)}, \ldots,  \b^{(m)}$ and for each of those we define
\begin{align*}
\Gamma^{(j)}(\Lambda) & := \sum_{i=1: i \in N_t(\b^{(j)})}^n B_t \left( \sigma_t \lambda_i(B_t, \bi^{(j)}) - \lambda_i(B_{t+1}, \bi^{(j)})\rule{0ex}{3ex}\right),\\
\Delta^{(j)} & :=\frac{d-1}{d} n_1(\b^{(j)}) B_1 - \sum_{t=2}^d \frac{n_t(\b^{(j)})B_t}{d},
\end{align*}
where $\Lambda=(\lambda_1, \ldots, \lambda_n)$. By Carver's theorem to reach a contradiction and show the theorem it is enough to show that there exist $m+1$ non-negative constants $k_j$, such that
\begin{equation}\label{eq:finite:LB:carverineq}
\sum_{j=1}^{m} k_j (\Gamma^{(j)}(\Lambda)+ \Delta^{(j)}) + k_{m+1} \equiv 0,
\end{equation}
with at least one of the $k$'s being positive. We call these $k$'s the Carver's constants. Akin to the proof of Theorem~\ref{thm:2value:LB}, to prove (\ref{eq:finite:LB:carverineq}) we first show that there exist Carver's constants which make the sum of the $\Gamma$ functions equal to $0$ (cf. Lemma~\ref{le:finite:variables}) and then prove that these constants also annul the sum of the $\Delta$ functions (cf. Lemma~\ref{le:finite:knownterms}).

\begin{lemma}\label{le:finite:variables}
There exist positive constants $k_1, \ldots, k_m$ such that $\sum_{j=1}^{m} k_j \Gamma^{(j)}(\Lambda) \equiv 0$ by canceling out all the variables $\lambda_i(\cdot, \cdot)$. These constants are the same for all the bid vectors $\b^{(j)}$ that have the same value of $\Delta^{(j)}$.
\end{lemma}
\begin{proof}
For any bid vector $\b^{(j)}$, we define
\begin{equation}\label{eq:finite:LB:constants}
k_j=\frac{B_d^{n-n_d(\b^{(j)})}}{\prod_{r=1}^{d-1} B_r^{n_r(\b^{(j)})}}.
\end{equation}

\begin{claim}\label{claim:finite:LB:lambdasprop}
The constants $k_1, \ldots, k_m$ are such that $\sum_{j=1}^{m} k_j \Gamma^{(j)}(\Lambda) \equiv 0$ by canceling out all the variables $\lambda_i(\cdot, \cdot)$ if and only if for any bid vector $\b^{(j)}=(B_t, \bi^{(j)})$, $d \geq t>1$ and $1 \leq i \leq n$, it holds that
\begin{equation}\label{eq:finite:LB:cancelingvars}
k_s = \frac{B_t}{B_{t-1}} k_j,
\end{equation}
where $\b^{(s)}=(B_{t-1}, \bi^{(j)})$.
\end{claim}
\begin{proof}
For the if part, note that by hypothesis $k_1, \ldots, k_m$ in particular cancel out the variable $\lambda_i(B_t, \bi^{(j)})$, $1 \leq i \leq n$ and $d \geq t>1$ (recall that, for all $i$, we bounded from above the variable $\lambda_i(B_1,\cdot)$ with $1$ and then this variable is not in the system). As observed above this variable appears in (\ref{eq:finite:LB:ineqsys}) only twice with a non-null coefficient. Specifically, $\lambda_i(B_t, \bi^{(j)})$ has a coefficient of $B_t$ in the constraint relative to $\b^{(j)}=(B_t, \bi^{(j)})$ and a coefficient of $-B_{t-1}$ in the constraint defined upon $\b^{(s)}=(B_{t-1}, \bi^{(j)})$. Then to cancel $\lambda_i(B_t, \bi^{(j)})$, it must be the case that $B_t k_j = B_{t-1} k_s.$

For the only if part, take Carver's constants $k_1, \ldots, k_m$ which satisfy \eqref{eq:finite:LB:cancelingvars} and assume by contradiction that there exists a variable $\lambda_i(B_t, \bi^{(j)})$, $1 \leq i \leq n$ and $d \geq t>1$, that has not a coefficient $0$ in $\sum_{j=1}^{m} k_j \Gamma^{(j)}(\Lambda)$. Since, as noted above, this variable has a non-zero coefficient only in the constraints relative to $\b^{(j)}=(B_t, \bi^{(j)})$ and $\b^{(s)}=(B_{t-1}, \bi^{(j)})$ respectively then it must be the case that $B_t k_j \neq B_{t-1} k_s,$ thus a contradiction.
\end{proof}
We then need to prove that our definition of Carver's constants in (\ref{eq:finite:LB:constants}) satisfies the requirement in Claim \ref{claim:finite:LB:lambdasprop}. Take two bid vectors $\b^{(j)}$ and $\b^{(s)}$ defined as in the statement of the claim. Since $n_t(\b^{(j)})=n_t(\b^{(s)})+1$, $n_{t-1}(\b^{(j)})=n_{t-1}(\b^{(s)})-1$ and $n_{r}(\b^{(j)})=n_{r}(\b^{(s)})$ for $r \in D \setminus \{t-1, t\}$, by (\ref{eq:finite:LB:constants}), we get that
\begin{align*}
  k_j & = \frac{B_d^{n-n_d(\b^{(j)})}}{\prod_{r=1}^{d-1} B_r^{n_r(\b^{(j)})}} = \frac{B_{t-1}}{B_{t}} \frac{B_d^{n-n_d(\b^{(j)})}}{\prod_{r=1, r\neq t, t-1}^{d-1} B_r^{n_r(\b^{(j)})} \cdot B_t^{n_t(\b^{(j)})-1} \cdot B_{t-1}^{n_{t-1}(\b^{(j)})+1}} \\&= \frac{B_{t-1}}{B_{t}} \frac{B_d^{n-n_d(\b^{(s)})}}{\prod_{r=1}^{d-1} B_r^{n_r(\b^{(s)})}}
  = \frac{B_{t-1}}{B_{t}} k_s.
\end{align*}
The proof concludes by observing that being the bids in the domain positive, so are the Carver's constants we define. Moreover, the second part of the lemma follows from the fact that $\Delta^{(j)}$ is a function of $n_t(\b^{(j)})$, $t \in \{1,\ldots, d\}$. 
\end{proof}

\begin{lemma}\label{le:finite:knownterms}
For the Carver's constants $k_1, \ldots, k_m$ defined in Lemma~\ref{le:finite:variables}, it holds: $$\sum_{j=1}^{m} k_j \Delta^{(j)}= 0.$$
\end{lemma}
\begin{proof}
By definition,
\begin{align*}
\sum_{j=1}^{m} k_j \Delta^{(j)} &= \sum_{\begin{subarray}{c}n_1, \ldots, n_d \in \mathbb{N}:\\ n_1+\ldots+n_d=n\end{subarray}} {n \choose n_1,\ldots, n_d} \frac{B_d^{n-n_d}}{\prod_{r=1}^{d-1} B_r^{n_r}} \left[ \frac{(d-1)}{d} n_1 B_1 - \sum_{h=2}^{d} n_h \frac{B_{h}}{d}\right].
\end{align*}
Observe that for any $1 \leq t \leq d$
\[
{n \choose n_1,\ldots, n_d} n_t = n {n-1 \choose n_1, n_{t-1}, n_t-1, n_{t+1},\ldots, n_d} 
\]
and then
\begin{align*}
\sum_{\begin{subarray}{c}n_1, \ldots, n_d \in \mathbb{N}:\\ n_1+\ldots+n_d=n\end{subarray}} {n \choose n_1,\ldots, n_d} \frac{ n_t B_t}{\prod_{r=1}^{d} B_r^{n_r}} &= n \sum_{\begin{subarray}{c}n_1, \ldots, n_d \in \mathbb{N}:\\ n_1+\ldots+n_t-1+\ldots+n_d=n-1\end{subarray}} {n-1 \choose n_1,\ldots, n_t-1, \ldots, n_d} \frac{1}{B_t^{n_t-1}\prod_{\begin{subarray}{c}r=1,\\r \neq t\,\end{subarray}}^{d} B_r^{n_r}}\\&=n\left(\sum_{i=1}^d \frac{1}{B_i} \right)^{n-1},
\end{align*}
where last equality follows from the multinomial theorem. Consequently, \begin{align*}
\sum_{j=1}^{m} k_j \Delta^{(j)} &= n B_d^n \frac{d-1}{d} \left(\sum_{i=1}^d \frac{1}{B_i} \right)^{n-1} - n B_d^n \frac1d \sum_{t=2}^d \left(\sum_{i=1}^d \frac{1}{B_i} \right)^{n-1}=0.\qedhere
\end{align*}
\end{proof}
By setting $k_{m+1}=0$, the theorem follows by the two lemmata above.
\end{proof}

\section{Universally truthful auctions and lotteries}\label{sec:equivalence}\label{sec:continuous:LB}
Given a function $f(b)$ we call $f^+(b)$ the \emph{right-continuous version} of $f(b)$ defined as
\[
f^+(b) = \lim_{c \rightarrow b^+} f(c).
\] 
where, as usual, $c \rightarrow b^+$ means that $c$ is approaching $b$ from the right. 
\begin{theorem}\label{thm:equiv}
There exists a bijection between truthful lotteries and universally truthful auctions.
\end{theorem}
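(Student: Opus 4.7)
The plan is to exhibit the bijection by two explicit constructions and then check that they are mutual inverses. Given a universally truthful auction $\mathcal{A}$, viewed as a distribution $\mu$ over deterministic truthful auctions, I set up the ``auction $\to$ lottery'' map first. Since each deterministic truthful one-parameter auction is characterized by threshold functions $t_i^r(\bi)$ (where $r$ is the sampled random string), with bidder $i$ winning iff $b_i \geq t_i^r(\bi)$ and paying $t_i^r(\bi)$ when winning, I define
\[
\lambda_i(b_i,\bi) := \prob{b_i \geq t_i^r(\bi)}, \qquad p_i(b_i,\bi) := \expec{t_i^r(\bi) \cdot \mathbf{1}[\,b_i \geq t_i^r(\bi)\,]},
\]
with $r \sim \mu$. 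Then $\lambda_i(\cdot,\bi)$ is, by construction, the CDF of $t_i^r(\bi)$, hence non-decreasing and right-continuous. A Stieltjes integration by parts yields $p_i = b_i\lambda_i(b_i,\bi) - \int_0^{b_i}\lambda_i(b,\bi)\,db$, matching the truthful-lottery characterization recalled in Section~\ref{sec:preli} with $h_i \equiv 0$.

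For the converse direction I start from a truthful lottery $(\lambda,p)$ and first replace each $\lambda_i(\cdot,\bi)$ by its right-continuous version $\lambda_i^+(\cdot,\bi)$, which coincides with $\lambda_i$ except at countably many points. Using a single uniform variate $U \in [0,1]$ as the random string, the deterministic auction sampled at $U=u$ sets, for each bidder $i$, the threshold $T_i^u(\bi) := \inf\{b \geq 0 : \lambda_i^+(b,\bi) \geq u\}$ and allocates/charges accordingly. Each such deterministic mechanism is truthful because the threshold is independent of $b_i$. The standard quantile transform gives $\prob{T_i^U(\bi) \leq b_i} = \lambda_i^+(b_i,\bi)$, and a direct calculation shows the expected payment equals the Myerson formula with $h_i \equiv 0$. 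To close the loop I will verify that composing the two maps in either order yields the identity: starting from a universally truthful auction we recover the distribution of thresholds via the quantile inverse of its CDF, and starting from a lottery we recover exactly $\lambda_i^+$ as the induced win-probability.

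The main obstacle is the behavior of $\lambda_i$ at its jump discontinuities, where a priori several distinct threshold distributions could give rise to the same $\lambda_i$, and conversely distinct $\lambda_i$'s differing only at countably many points would map to the same universally truthful auction. This is exactly the purpose of introducing the operator $f^+$ just before the theorem: identifying each truthful lottery with the canonical right-continuous representative of its allocation rule collapses precisely the ambiguity in both directions, and the two-sided inverse property is then proved on these canonical representatives. A minor subtlety is the $h_i(\bi)$ term in the payment formula; because the map from universally truthful auctions produces lotteries with $h_i \equiv 0$, the bijection is to be read under this normalization, any non-zero $h_i$ being absorbable as a deterministic, bid-independent side transfer that may be layered on top of either construction without affecting the correspondence.
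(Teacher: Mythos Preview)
Your proposal is correct and follows essentially the same route as the paper: both directions rest on the correspondence between a monotone allocation function $\lambda_i(\cdot,\bi)$ and the cumulative distribution function of a threshold random variable, with the right-continuous normalization $\lambda^+$ used to resolve ambiguity at jump points. The paper simply invokes the existence of a random variable with a prescribed CDF (Billingsley, Theorem~14.1) and sketches the converse in one line, whereas you make the quantile-transform construction explicit, verify the Myerson payment identity, and spell out the $h_i\equiv 0$ normalization; these are welcome elaborations but not a different argument.
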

\begin{proof}
Let $(\lambda, p)$ be a truthful lottery over $[1,H]$. Fix $i$ and $\bi$ and consequently to ease the notation write $\lambda_i(b, \bi)$ as the one-variable function $\lambda(b)$. We show how to define the corresponding universally truthful auction.

We define the function 
\[
\delta(b) = \left\{ \begin{array}{ll} 0 & \textrm{if $b < 1$}\\
											\lambda^+(b)	& \textrm{if $1 \leq b \leq H$}\\
											\lambda(H) & \textrm{if $H<b < 2H$} \\
											1 & \textrm{if $b \geq 2H$},	
											 \end{array} \right.
\]
where $\lambda^+(b)$ is the right-continuous version of $\lambda(b)$. (Note that $\lambda^+(H)=\lambda(H)$ as the domain of $\lambda$ is $[1,H]$.) Observe that $\delta$ is right-continuous, non-decreasing and that $\lim_{b \rightarrow -\infty} \delta(b)=0$, $\lim_{b \rightarrow +\infty} \delta(b)=1$. Therefore, it is known that there exists on some probability space a random variable $X$ for which $\delta(x)=\prob{X \leq x}$ (see, e.g., Theorem 14.1 in \cite{probbook}). In other words, $\delta$ is the cumulative distribution function of a random variable $X$. We can use $X$ to define a universally truthful auction that with 
\begin{equation}\label{eq:mapping}
\prob{x < X \leq y}=\delta(y)-\delta(x) 
\end{equation}
charges a price in $(x,y]$. Moreover, price $z$ is a weak threshold  (i.e., bidder wins by declaring at least $z$) if $\lambda$ is right-continuous at $z$, and strict (i.e., bidder wins by declaring strictly more than $z$) otherwise, cf. Definition 2.4 in \cite{GHKSW06}. 

Conversely, starting from the cumulative distribution function of a universally truthful auction we can define a truthful lottery by using the arguments above backwards.
\end{proof}

Note that the universally truthful auction given by \eqref{eq:mapping} may result rather unnatural mainly because it is not clear with which probability a certain price is charged. However, there are cases in which $\delta$ has some properties for which \eqref{eq:mapping} can be easily decoded. Let $B_1 > B_2 > \ldots > B_T$ be the number of different prices charged by the lottery for some $i$ and $\bi$. If $T$ is finite then $\delta$ is a step function and $X$ a discrete random variable; in this case, one can take the interval $(B_j, B_{j-1}]$ and infer from the proof that the universally truthful auction will charge $B_{j-1}$ with essentially probability $\lambda_i(B_{j-1}, \bi)-\lambda_i(B_{j}, \bi)$. In the case in which $T$ is infinite, with $\delta$ having a derivative $\theta$ which can be integrated back to $\delta$ (e.g., $\lambda$ is continuously differentiable) then $\theta$ is the probability density function of the random variable $X$ meaning that \eqref{eq:mapping} can be rewritten as $\int_x^y \theta(u) du$ so to obtain an infinitesimal weight for the probability of prices in $(x,y]$. 

Finally, let us observe that the proof above holds for col\-lu\-sion-resistance as well when lotteries are singular.

\paragraph{A Matching lower bound for $[1,H]$ domains.}
We can now move back to the case in which bidders have valuations belonging to an interval $[1,H]$ with the aim of understanding how relaxing collusion-resistance to truthfulness affects the approximation guarantee to $\opt$ achievable also in the case of continuous domains. By using a lower bound technique developed in \cite{GHKSW06,GolHar05} for universally truthful auctions and by relying on the bijection between these auctions and truthful lotteries just proved, we can show that it is not possible to achieve an approximation guarantee better than that given in Theorem \ref{thm:cont:UB}, even by relaxing collusion-resistance to truthfulness.

\begin{theorem}\label{thm:cont:lowerbound}
For truthful lotteries and bidders bidding from a domain $[1,H]$, \opt\ cannot be approximated better than $\ln(H)+1$.
\end{theorem}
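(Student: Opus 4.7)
The plan is to invoke the bijection in Theorem~\ref{thm:equiv} and adapt the probabilistic \emph{equal-revenue} technique of~\cite{GHKSW06} to the setting of lotteries. Fix the number of bidders $n\geq 1$. For any fixed $\bi$, the map $b_i \mapsto (\lambda_i(b_i,\bi), p_i(b_i,\bi))$ is a truthful single-bidder lottery, so by Theorem~\ref{thm:equiv} it corresponds to a probability measure $\mu_{\bi}$ on $[0,H]$ whose expected revenue on a bidder of valuation $v$ equals $p_i(v,\bi) = \int_{[0,v]} p' \, d\mu_{\bi}(p')$.

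The core of the argument is the choice of a hard distribution $D$ on valuations in $[1,H]$, namely the \emph{equal-revenue} distribution with density $f(v) = 1/v^2$ on $(1,H)$ together with a point mass of $1/H$ at $v=H$. A short calculation gives $\prob{v\geq x} = 1/x$ for every $x\in[1,H]$, hence $\expec{v} = \int_1^H \tfrac{1}{v}\,dv + 1 = \ln H + 1$. Crucially, for any single posted price $p'\in[0,H]$ the expected revenue is $p'\cdot\prob{v\geq p'}$, which is always at most $1$: it equals $p'\leq 1$ for $p'\in[0,1]$ and equals exactly $1$ for $p'\in[1,H]$. Integrating this bound against $\mu_{\bi}$ yields, for every $i$ and every $\bi$,
\begin{equation*}
\expec{p_i(v,\bi)} \ =\ \int_{[0,H]} p'\,\prob{v\geq p'}\,d\mu_{\bi}(p') \ \leq\ 1,
\end{equation*}
where the expectation is over $v\sim D$.

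To conclude, I would draw $\b$ i.i.d.\ from $D^n$. By linearity of expectation and the per-bidder bound above, $\expec{\sum_i p_i(\b)} \leq n$, while $\expec{\opt(\b)} = n(\ln H + 1)$. Were a truthful lottery to $r$-approximate $\opt$ for some $r < \ln H + 1$, then $\sum_i p_i(\b) \geq \opt(\b)/r$ would hold pointwise; taking expectations would force $n \geq n(\ln H + 1)/r$, contradicting $r < \ln H + 1$. Hence $r \geq \ln H + 1$, matching Theorem~\ref{thm:cont:UB}.

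The main subtlety I foresee is the careful treatment of the right-continuous extension $\lambda^+$ and the weak/strict-threshold dichotomy highlighted at the end of Theorem~\ref{thm:equiv} (which affects how the support of $\mu_{\bi}$ is identified on the boundary $\{1,H\}$); fortunately the revenue calculation depends only on the value of $\int p'\,d\mu_{\bi}(p')$, so this technicality does not affect the final bound.
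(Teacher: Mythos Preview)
Your proposal is correct and follows essentially the same approach as the paper: both invoke Theorem~\ref{thm:equiv} to reinterpret the lottery as a randomized posted-price (bid-independent) auction, draw bids i.i.d.\ from the equal-revenue distribution on $[1,H]$ (so that $\prob{v\geq p}=1/p$ and $\expec{v}=\ln H+1$), bound each bidder's expected payment by~$1$, and compare against $\expec{\opt}=n(\ln H+1)$. The only cosmetic difference is that the paper conditions on the bid-independent price $P_i(\bi)$ rather than integrating against your measure~$\mu_{\bi}$; note, incidentally, that your identity $p_i(v,\bi)=\int_{[0,v]}p'\,d\mu_{\bi}(p')$ holds with~$\leq$ in general (voluntary participation forces $h_i\leq 0$, and $\mu_{\bi}$ need not be a full probability measure on $[0,H]$ when $\lambda_i(H,\bi)<1$), but that inequality is exactly what the argument needs.
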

\begin{proof}
We show that for any truthful lottery $L=(\lambda, p)$ there exists a bid vector $\b \in [1,H]^n$ such that the revenue collected by $L$ on input $\b$ is at most $\opt(\b)/(\ln(H)+1)$. 

To prove this, we analyze the behavior of $L$ on a bid vector $\b$ chosen from a carefully designed probability distribution. The outcome of the lottery is a random variable depending on the randomness in $\b$ and $\lambda$. We prove that the expected revenue of $L$ is at most a $\ln(H)+1$ fraction of the expected optimal revenue. Then, by definition of expectation, %this implies that 
there must exist a bid vector for which the claim holds.

Because of Theorem \ref{thm:equiv}, $L$ is equivalent to a universally truthful auction $\cal A$. The latter is characterized in terms of a so-called bid-independent auction (see Definition 2.4 in \cite{GHKSW06}). In a bid-independent auction a price is computed as a (possibly randomized) function of $\bi$ only and offered to the bidder $i$. The bidder wins if the price is bounded from above by $b_i$. In the rest of the proof we will argue about the bid-independent auction $\cal A$ defined upon lottery $L$.

Consider the bid vector $\b$ in which each $b_i$ is i.i.d. generated from the distribution in which $\prob{b_i>y}=1/y$ for $y \in [1, H]$ and $\prob{b_i = H}=1/H$. Let $P_i(\bi)$ be the price charged by $\cal A$ to agent $i$ when the bidder declares $b_i$ and the other agents declare $\bi$. Note that, since $\cal A$ is bid-independent then $P_i(\bi)$ is a random variable depending on the randomness of $\cal A$ and $\bi$ only. Let $R_i$ be the expected revenue from bidder $i$ which is $0$ if $b_i < P_i(\bi)$ and $P_i(\bi)$ otherwise.  For $p \geq 0$, $\expec{R_i\left|{P_i(\bi)}\right.=p}=p \cdot \prob{b_i > p \left|{P_i(\bi)}\right.=p} = p \cdot \prob{b_i > p} \leq 1$, where the equality in the penultimate step follows since $b_i$ is independent of ${P_i(\bi)}$. Therefore, 
{\allowdisplaybreaks
\begin{align*}
\expec{R_i}&=\sum_p \expec{R_i \left|{P_i(\bi)}\right.=p}\cdot \prob{{P_i(\bi)}=p} \leq \sum_p \prob{{P_i(\bi)}=p} \leq 1.
\end{align*}
}
We can then conclude that the expected revenue of $\cal A$ on the randomly generated bid vector $\b$ is at most $n$. On the other hand, $\expec{\opt(\b)}=n \expec{b_i} = n (\ln(H)+1)$.
\end{proof}

\section{Conclusions}\label{sec:conclusions}
We consider incentive-compatible auctions in which the auctioneer sells lottery tickets for winning the good being auctioned. We aim at the design of auctions maximizing the revenue; the benchmark of interest is, in this setting, the optimal revenue, defined as the sum of bidders' valuations. Although well motivated, rather technical benchmarks, such as ${\cal F}^{(2)}$ defined above, are instead used in related literature. We study these auctions in the context of digital goods and prove the equivalence, in terms of approximation guarantee to the optimal revenue, between collusion-resistant and truthful auctions (cf. Table~\ref{table}). The former is a much stronger requirement than the latter: it requires the auctions to be resistant to coalitions of cheating bidders exchanging side payments. This equivalence is proved to be true in three different scenarios of bidders' domains: binary, finite and interval $[1,H]$. In the first two cases, the lower bound is proved via a new technique to bound the revenue of truthful lotteries which we regard as our main technical contribution. In the case of bids coming from $[1,H]$, the lower bound is proved by showing the equivalence between universally truthful auctions and truthful lotteries and then applying a known technique due to \cite{GHKSW06}. The proof of equivalence shows that the concept of incentive-compatible lotteries is rather useful: lotteries can be much more natural to imagine than universally truthful auctions.

A number of questions are raised by our results. For example, it would be interesting to evaluate whether the feasibility of using the optimal benchmark is due to the assumptions on the domains, or rather to the expressiveness of lotteries. We believe that a study of lotteries in settings different from that of digital goods can shed light on this important question. Notice, however, that moving from the digital good setting may imply that we have to give up collusion-resistance in order to get any reasonable performance. Indeed, consider a 1-item auction with 2 possible bid values $\{L,H\}$, $L<H$; Suppose that $\lambda_i$ had to depend only on bid $b_i$ and not $\bi$. Since we have only one item available to sell, we need $\sum_{j=1}^n \lambda_j(H)\leq 1$, from which it follows that some $j$ satisfies $\lambda_j(H) \leq 1/n$. Suppose then that all other bidders have value $L$, so that $\opt=H$. All the other bidders would pay at most $L$ while $j$ pays at most $H/n$, so for $H \gg nL$ this fails to approximate \opt\ within any constant.

\bigskip \noindent {\bf Acknowledgments.}
%This work is supported by EPSRC grant EP/G069239/1 ``Efficient Decentralised Approaches in Algorithmic Game Theory.''
We wish to 
thank Patrick Briest for proposing this study and for the invaluable discussions we had in the starting phases of this project. We are also indebted to him for 
the ideas contained in Section 
\ref{sec:bounded:UB}. We also thank 
Orestis Telelis for his comments on an early draft of this work.

\bibliographystyle{abbrv}

\end{document}